\newcommand{\sups}[1]{\ensuremath{^{\textrm{#1}}}}
\theoremstyle{plain} \newtheorem{splcases}{Theorem}
\theoremstyle{plain} \newtheorem{errbound}[splcases]{Theorem}
\begin{document}

\title{Approximation of Average Run Length of Moving Sum Algorithms Using Multivariate Probabilities}
\author{Swarnendu Kar\corref{cor1}}
\ead{swkar@syr.edu}
\author{Kishan G. Mehrotra}
\ead{mehrotra@syr.edu}
\author{Pramod K. Varshney}
\ead{varshney@syr.edu}

\address{Department of Electrical Engineering and Computer Science\\ Syracuse University, Syracuse, NY, 13244}

\cortext[cor1]{Corresponding author. Tel: (315) 751-1370 Fax: (315) 443-4745}

\begin{abstract}
Among the various procedures used to detect potential changes in a stochastic process the moving sum algorithms are very popular due to their intuitive appeal and good statistical performance. One of the important design parameters of a change detection algorithm is the expected interval between false positives, also known as the average run length (ARL). Computation of the ARL usually involves numerical procedures but in some cases it can be approximated using a series involving multivariate probabilities. In this paper, we present an analysis of this series approach by providing sufficient conditions for convergence and derive an error bound. Using simulation studies, we show that the series approach is applicable to moving average and filtered derivative algorithms. For moving average algorithms, we compare our results with previously known bounds. We use two special cases to illustrate our observations.
\end{abstract}

\begin{keyword}
Average Run Length \sep Change Detection \sep Moving Average \sep Filtered Derivative \sep Control Charts
\end{keyword}

\maketitle

\section{Introduction}
The problem of detecting a change in the mean value of a process, when both the change point and change magnitude are unknown, is of great importance in various disciplines such as econometrics, engineering, quality control, technical analysis of financial data, and edge detection in image processing.
The optimal scheme, which involves Maximum-Likelihood estimation of both the change point and the change magnitude, is computationally prohibitive. Hence various simple but suboptimal methods like ordinary moving average (MA),  exponentially weighted moving average (EWMA), and filtered derivative (FD) are used in practice. For example, MA is used in technical analysis of financial data, like stock prices, returns or trading volumes \citep*{Mur99}.

To apply the MA scheme a finite, (say $k$),  immediate past samples are added with equal weights $1/k$, while in EWMA,  the past samples are combined with exponentially decreasing weights. A generalization of the MA scheme is where the past $k$ samples are combined with arbitrary positive weights (see  \citep*{Lai74,Bohm90}). In other applications, e.g., filtered derivative in edge detection \citep*{Bas81}, the past $k$ samples are combined with both positive and negative weights. In this paper, such generalizations are referred to as moving sums (MOSUM). In this paper, we study the most commonly used MOSUM algorithms - namely the MA and FD schemes.

Change detection schemes are assessed on the basis of the statistical distribution of the run length, i.e., the number of test samples taken before a false positive is detected. For most practical purposes, the distribution function of run length is adequately summarized by its expected (or average) value, also known as the average run length (ARL). Unfortunately, for most practical schemes closed form expression for ARL is difficult to obtain. For the MA scheme, the bulk of the research so far has been dedicated to either tabulating numerical results through Monte Carlo simulations \citep*{SQC04} or deriving bounds using multivariate probability distribution functions (MPDFs) \citep*{Bohm90}. Very little work has been done to date regarding the ARL of FD scheme.

The ARL can be written as the sum of an infinite number of MPDFs of increasingly higher dimensions. But MPDFs, in general, can be computed only numerically and the computational intensity increases with the dimension of the multivariate vector. While addressing the problem of approximating the ARL for EWMA algorithms, it was observed by \cite{Rob78} that the ratio of the successive MPDFs converge as the dimension increases. This fact was used to propose a series based approach of approximating the ARL. For the MOSUM algorithm with positive weights and by using only $k$\sups{th} order MPDFs, upper and lower bounds were proposed by \cite{Lai74} and improved by \cite{Bohm90}. This was based on the idea that an MPDF of dimension larger than $k$ can be bound (both above and below) by products of lower order MPDFs.  Consistent with the name of the authors, we refer to those results as LBH bounds.

In this paper, we analyze the series approach of \cite{Rob78} by laying out sufficient conditions for convergence and also provide an error bound. With simulation studies, we demonstrate the versatility of this approach by showing that the conditional survival probabilities also converge for MA and FD algorithms. For MA algorithms, we compare the actual convergence vis-a-vis the LBH bounds. Through simulation studies for both MA and FD schemes, we demonstrate that a satisfactory approximation of the ARL can be obtained by using only $\lceil k/2 \rceil$\sups{th} order MPDFs. Compared to $k$\sups{th} order MPDFs as required by the LBH bounds, this provides a significant saving in computation for the MA scheme.

The rest of this paper is organized as follows. We introduce some notations in Section \ref{sec:notations}. In Section \ref{sec:mainresults}, we summarize the main results of this paper. We examine the convergence of conditional survival probabilities for MOSUM algorithms in Section \ref{sec:conv:rn}. In Section \ref{sec:comp:approx}, we compare the convergence of ARL for MA algorithms with the LBH bounds. We also demonstrate that MPDFs of $\lceil k/2 \rceil$\sups{th} order provides a reasonable approximation of the ARL. Concluding remarks are provided in Section \ref{sec:conc}.

\section{Notations} \label{sec:notations}
Let $X_1,X_2,\ldots X_m, \ldots $ be a sequence of observations obtained from a discrete time random process. We assume that  $X_i$'s are independently distributed random variables with variance $\sigma^2$. It is assumed that the mean of $X_i$'s possibly changes from $\mu$ to $\theta$ at some point, here
$\sigma^2$,  $\mu$ and $\theta > \mu$ are unknown parameters and possible time of change is also unknown.  For detecting a change in the mean value of $X_i$'s from we need to formulate an appropriate linear test statistic, say $Y_m = \sum_{i=m-k+1}^m c_{m-i}X_i$, where $c_i$ are constants and compare it against some threshold. \cite{Rob59} has considered both the MA and EWMA schemes for appropriately  chosen weights.
A generalization of MA was considered by \cite{Bohm90}, where $c_i \geq 0$ for all $i$'s.  There are other applications where all the weights need not be positive. For example, in the context of edge detection, \cite{Bas81} uses the following test statistic
\begin{align*}
Y_m = \sum_{i=m-k+1}^{m-k/2} X_i - \sum_{i=m-k/2+1}^m X_i
\end{align*}
where $k$ is assumed to be an even number. This is also known as the filtered derivative (FD) algorithm, since we take the difference of averaged (filtered) block of samples.

To test whether the process mean is  $\mu$ or has shifted to $\theta$, the test statistic $Y_m$ is monitored  for successive values of $m$  and  compared against  an upper  threshold, say $h$.  The threshold is sometimes also specified as multiples of the standard deviation in excess of the mean of the test statistic, i.e. by the quantity $\delta$ defined by
\begin{align*}
\delta &= \frac{h-\mathbf{E}(Y_m)}{\sqrt{\text{Var}(Y_m)}}, \\
       &= \frac{h-\mu\sum c_i}{\sigma \sqrt{\sum c_i^2}}.
\end{align*}
In this paper, $\text{MOSUM}\left([c_0,c_1,\ldots,c_{k-1}],\delta \right)$
denotes  a moving sum algorithm with weights $c_0,c_1,\ldots,c_{k-1}$ and threshold $\delta$.

The time elapsed before $Y_m$ exceeds the thresholds for the first time is also known as the run length (RL) or stopping time.
\begin{align*}
RL=\max\{m:Y_m <h \}
\end{align*}
In this paper, we are interested in the average run length (ARL), namely the expectation of \emph{RL}, i.e., in
\begin{align*}
L=\mathbf{E}(RL)
\end{align*}

To show that $L$  can be represented as an infinite sum of probability distributions we define:
\begin{itemize}
\item $p_n$, the probability that RL$=n$, i.e.,  the test passes $n-1$ consecutive times but fails at the $n$\sups{th} instant;
\begin{align*}
 p_n = P(Y_{k+1},Y_{k+2},\ldots,Y_{k+n-1}<h,Y_{k+n}>h).
\end{align*}
\item $q_n$, the survival probability is the probability that the test passes $n$ consecutive times
\begin{align*}
 q_n = P(Y_{k+1},Y_{k+2},\ldots,Y_{k+n}<h)
\end{align*}
\item $r_n$, the conditional survival probability that the test survives at a particular instant given that it has already survived the past $n-1$ times,
\begin{align*}
 r_n = P(Y_{k+n}<h|Y_{k+1},Y_{k+2},\ldots,Y_{k+n-1}<h)
\end{align*}
\end{itemize}
It follows from these definitions that $p_n=q_{n-1}-q_n$ and $r_n=q_n/q_{n-1}$. Since the evaluations start at index $k$, the ARL function can be represented as
\begin{align}
L &=k-1+\sum_{n=1}^{\infty}np_n \nonumber \\
  &=k+\sum_{n=1}^{\infty}q_n  \label{Linfseries}
\end{align}
In this paper, we use \eqref{Linfseries} to either derive closed form expressions or provide approximate results.

\section{Main results}  \label{sec:mainresults}

In Theorem \ref{thm:splcases}, we obtain closed form expressions for $L$ for two special cases.
\begin{splcases}[Two special cases] \label{thm:splcases}
Let $X_1, X_2, \ldots X_n, \ldots$ be zero-mean i.i.d. random variables with symmetric pdf. Then the following results apply,
\begin{enumerate}
\item For MOSUM$\left([-1,1],0\right)$, $L=\exp(1)\approx 2.7183$.
\item For MOSUM$\left([1,1],0\right)$, $L=\sec(1)+\tan(1)\approx 3.4082$.
\end{enumerate}
\end{splcases}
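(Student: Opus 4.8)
The plan is to exploit the special structure of the weights so that the events $\{Y_{k+j}<h\}$ become comparisons of a short window of i.i.d. symmetric variables, and then use symmetry/exchangeability to evaluate the relevant multivariate probabilities combinatorially. In both cases $h=0$ after the threshold normalization, so $q_n = P(Y_{k+1}<0,\ldots,Y_{k+n}<0)$ is a probability about the signs of a moving sum over a window of length $k=2$.

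For part (1), MOSUM$([-1,1],0)$: here $Y_{m}=X_{m}-X_{m-1}$, so the event $Y_{k+j}<0$ is exactly $X_{k+j}<X_{k+j-1}$, i.e. the sequence $X_k,X_{k+1},\ldots,X_{k+n}$ is strictly decreasing on that block. For $n+1$ exchangeable random variables with a continuous (here symmetric, hence atomless at ties) distribution, the probability that they appear in any fixed order is $1/(n+1)!$, so $q_n = 1/(n+1)!$. Then by \eqref{Linfseries}, $L = k + \sum_{n\ge 1} q_n = 2 + \sum_{n\ge 1} \frac{1}{(n+1)!} = 2 + (e - 1 - 1) = e$. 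The first step — reducing $q_n$ to a pure ordering probability — is the crux; after that it is the exponential series.

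For part (2), MOSUM$([1,1],0)$: here $Y_m = X_m + X_{m-1}$, and $Y_{k+j}<0$ means $X_{k+j} < -X_{k+j-1}$. Using the symmetry of the pdf, replace each $X_i$ by $\pm X_i$ appropriately, or more cleanly introduce $Z_i = (-1)^i X_i$, which is again a sequence of i.i.d. symmetric variables; then $Y_{k+j}<0$ becomes an alternating-sign inequality among the $Z_i$'s, i.e. the event that $Z_k, Z_{k+1}, \ldots, Z_{k+n}$ is alternating (zigzag / up-down). Thus $q_n$ equals the probability that a permutation of $n+1$ exchangeable variables is an alternating permutation, which is $A_{n+1}/(n+1)!$ where $A_m$ is the number of alternating permutations of $m$ elements (the "zigzag" or Euler numbers). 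The exponential generating function $\sum_{m\ge 0} A_m x^m/m! = \sec x + \tan x$ is classical (André's theorem). One must be slightly careful about which of the two alternating patterns ("up-down" vs. "down-up") the inequality $Y_{k+1}<0$ actually selects and whether both endpoints' contributions are counted with the right multiplicity, but since the distribution is symmetric both patterns are equiprobable and the bookkeeping matches $A_{n+1}$. Then $L = k + \sum_{n\ge 1} q_n = 2 + \sum_{n\ge 1} \frac{A_{n+1}}{(n+1)!} = \sum_{m\ge 0}\frac{A_m}{m!} = \sec(1)+\tan(1)$, after checking that the low-order terms $A_0 = A_1 = 1$ reconcile the shift of index with the additive constant $k=2$.

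The main obstacle I anticipate is the careful combinatorial identification in part (2): translating the chain of inequalities $Y_{k+1}<0,\ldots,Y_{k+n}<0$ into "the underlying permutation is alternating" and matching it to the standard enumeration $A_{n+1}$, including verifying the boundary/index bookkeeping so that the additive constant $k$ and the missing terms of the EGF line up to give exactly $\sec(1)+\tan(1)$. Part (1) is essentially immediate once the ordering observation is made; the symmetry hypothesis is used only to guarantee ties have probability zero (and, in part (2), to symmetrize the signs).
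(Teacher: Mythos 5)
Your proposal is correct, and part (1) coincides with the paper's own argument: reduce $q_n$ to the probability of one fixed ordering of $n+1$ exchangeable atomless variables, get $q_n=1/(n+1)!$, and sum the exponential series (the direction of the monotone chain differs from the paper's only because of the weight-indexing convention, which is immaterial). For part (2) you take a genuinely more self-contained route. The paper does not derive $q_n$ at all; it cites Moran's result that $q_n$ is the coefficient of $z^{n+1}$ in $\sec z+\tan z$ (originally stated for Gaussian variables) and remarks that the proof extends to any symmetric law, then evaluates at $z=1$. You instead make the combinatorics explicit: the sign-flip $Z_i=(-1)^iX_i$ (i.i.d. with the same law precisely because the pdf is symmetric) turns the events $\{X_{k+j}+X_{k+j-1}<0\}$ into an alternating up-down/down-up pattern for $Z_k,\ldots,Z_{k+n}$, so by exchangeability $q_n=A_{n+1}/(n+1)!$ with $A_m$ the Euler zigzag numbers, and Andr\'e's generating function $\sum_m A_m x^m/m!=\sec x+\tan x$ gives $L=\sec(1)+\tan(1)$ after the index bookkeeping you describe (which checks out: $A_0=A_1=1$ exactly absorb the additive constant $k=2$). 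What your route buys is a complete, distribution-free proof that substantiates the paper's unproved claim that the Gaussian hypothesis in the cited reference is inessential; what the paper's route buys is brevity. One small remark: the equiprobability of the two alternating patterns follows from exchangeability alone (the complement bijection on rank patterns), so symmetry of the law is needed only to make the $Z_i$ identically distributed with the $X_i$ and to rule out ties; this does not affect the validity of your argument.
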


\begin{proof}

\begin{enumerate}
\item For $c_0=1$ and $c_1=-1$, the $q_n$ is given by
\begin{align}
q_n&= P\left(X_1-X_2<0,X_2-X_3<0,\ldots,X_n-X_{n+1}<0\right), \nonumber \\
   &= P\left(X_1<X_2<X_3\cdots <X_{n+1}\right). \label{pf:thm2:lbl1}
\end{align}
We recall that $X_1,X_2,\ldots,X_{n+1}$ are i.i.d. random variables. If we draw $n+1$ independent samples from the same distribution and order them, they can result in any one of the $(n+1)!$ possible orderings with equal probability. Since \eqref{pf:thm2:lbl1} denotes only one such ordering, we conclude that
\begin{align*}
q_n&= \frac{1}{(n+1)!}.
\end{align*}
This result is well known for the Gaussian random variables \citep*{Bar72}. Using \eqref{Linfseries}, we obtain
\begin{align*}
L&= 2+\sum_{n=1}^{\infty} \frac{1}{(n+1)!} =  \exp(1)
\end{align*}

\item Next we consider $c_0=1,c_1=1$. The MPDF $q_n$ was derived in the context of Gaussian random variables in \citep*{Mor83}, but a careful analysis of the proof reveals that  the argument is valid for any symmetric distribution.  Thus, quoting \citep*{Mor83}, $q_n$ is given by the coefficient of $z^{n+1}$ in the power series expansion of $\sec(z)+\tan(z)$ around the point $z=0$. Since,
\begin{align*}
\sec(z)+\tan(z)=1+z+\sum_{n=1}^{\infty}q_n z^{n+1}
\end{align*}
Evaluating the above series at $z=1$ and using \eqref{Linfseries}, we obtain
\begin{align*}
L&= \sec(1)+\tan(1).
\end{align*}
\end{enumerate}
\end{proof}

These two special cases  may not be of any practical use, nevertheless, we use these cases as illustrative examples in our discussions in Section \ref{sec:conv:rn}.

\subsection{Previous work on approximating ARL using MPDFs}
We observe that $q_n$ is an MPDF of $n$-dimensions  involving correlated variables $Y_{k+1},Y_{k+2},\ldots,Y_{k+n}$. In general, closed form expressions for $q_n$ are not available; the values are computed numerically. The intensity of these computations increases with $n$. As a result, the summation in the form of \eqref{Linfseries} can seldom be used to compute the ARL. Various methods for approximating the ARL have been proposed by the researchers; we briefly describe two such approaches due to \cite{Lai74} and \cite{Rob78}  below.

For EWMA, and using simulations, \cite{Rob78} observed  that as $n$ increases the conditional survival probabilities $\{r_n\}$ appear to converge. If we assume that $r_i\approx r_n, \forall i>n$, then the future survival probabilities can be approximated as $q_i \approx q_{i-1}r_n, \forall i> n$. By using  \eqref{Linfseries}, we obtain the $n$\sups{th} order approximation of $L$ as:
\begin{align}
L_n &= k+\sum_{i=1}^{n-1}q_i+ q_n(1+r_n+r_n^2+\ldots) \nonumber \\
    &= k+\sum_{i=1}^{n-1}q_i+ \frac{q_n}{1-r_n}. \label{Lnseries}
\end{align}
Thus, in \eqref{Lnseries}, the ARL  is approximated by using only a few lower order MPDFs. Another significant result due to \cite{Lai74} and improved by \cite{Bohm90} provides an upper and a lower bound on ARL for MOSUM algorithms with positive weights, as follows:
\begin{align}
1+\frac{q_k}{p_k}\leq L \leq k+\frac{q_k}{p_k}. \label{mabound}
\end{align}
We denote the lower and upper bounds (together we call them LBH bounds) in \eqref{mabound} by $L_l$ and $L_u$ respectively. These bounds are significant because they are asymptotically the same, i.e.,  when ($L$) is large compared to the span $k$, the upper and lower bounds are almost equal. The ARL can then be approximated as $L\approx q_k/p_k$, which requires the computation of MPDFs of order $k$.

Very little work can be found in the literature, if any, regarding the approximation of ARL for the FD scheme.

\subsection{Convergence analysis of $\{r_n\}$ and $\{L_n\}$}
In Theorem \ref{thm:errbound}, we present an analysis of the series in \eqref{Lnseries}. In particular, we derive an error bound that relates the convergence of $L_n$ to that of $r_n$.

\begin{errbound}[Convergence and error bound] \label{thm:errbound}
Assume $L<\infty$, $r_n\rightarrow r$ for some $r\in [0,1)$. Then
\begin{enumerate}
\item  $L_n\rightarrow L$.
\item  Choose $\epsilon$ such that $0<\epsilon<1-r$. If $|r_n-r|<\epsilon$ \mbox{ for all } $n \geq m$, then
\begin{align}
\left | \frac{L_n-L}{L} \right| &<\frac{2\epsilon}{(1-r-\epsilon)^2} \mbox{ for all } n \geq m  \label{errbd:gen}
\end{align}
and if, in particular, convergence $\{r_n\}_{m+1}^{\infty}$ is monotonic, then
\begin{align}
\left | L_n-L \right| <\frac{\epsilon}{(1-r-\epsilon)^2} \mbox{ for all } n \geq m \label{errbd:mon}
\end{align}
\end{enumerate}
\end{errbound}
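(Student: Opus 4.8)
The plan is to analyze the tail of the series $L = k + \sum_{i=1}^\infty q_i$ by controlling the ratios $r_i = q_i/q_{i-1}$ once they are close to $r$. First I would establish the trivial but essential monotonicity facts: since the events are nested, $q_1 \geq q_2 \geq \cdots \geq 0$, hence each $r_i \in [0,1]$; and since $L<\infty$ the series $\sum q_i$ converges, which already forces $q_n \to 0$. For part (1), fix $\epsilon \in (0, 1-r)$ and choose $m$ with $|r_i - r| < \epsilon$ for $i > m$ (using $r_n \to r$). Then for $n \geq m$ and any $i > n$ we have $q_i = q_n \prod_{j=n+1}^i r_j \leq q_n (r+\epsilon)^{i-n}$, so the genuine tail $\sum_{i > n} q_i$ is dominated by the geometric sum $q_n (r+\epsilon)/(1 - r - \epsilon)$, which tends to $0$ as $n \to \infty$ because $q_n \to 0$. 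The approximate tail $q_n/(1-r_n) - q_n = q_n r_n/(1-r_n)$ likewise tends to $0$ since $r_n$ stays bounded away from $1$. Writing $L - L_n = \sum_{i>n} q_i - q_n r_n/(1-r_n)$ and sending $n \to \infty$ gives $L_n \to L$.

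For part (2), the idea is to bound $|L - L_n|$ by comparing the true tail $T_n := \sum_{i>n} q_i$ with the geometric approximation $A_n := q_n r_n/(1-r_n)$ (note $L - L_n = T_n - A_n$). Under the hypothesis $|r_i - r| < \epsilon$ for all $i \geq m$, for $n \geq m$ each partial product satisfies $(r-\epsilon)^{i-n} \leq \prod_{j=n+1}^i r_j \leq (r+\epsilon)^{i-n}$, so
\begin{align*}
\frac{q_n(r-\epsilon)}{1-(r-\epsilon)} \;\leq\; T_n \;\leq\; \frac{q_n(r+\epsilon)}{1-(r+\epsilon)},
\end{align*}
and similarly, since $r - \epsilon < r_n < r + \epsilon$, the monotone function $x \mapsto x/(1-x)$ gives
\begin{align*}
\frac{q_n(r-\epsilon)}{1-(r-\epsilon)} \;\leq\; A_n \;\leq\; \frac{q_n(r+\epsilon)}{1-(r+\epsilon)}.
\end{align*}
Hence $|T_n - A_n|$ is at most the length of this common enclosing interval, namely
\begin{align*}
q_n\left(\frac{r+\epsilon}{1-r-\epsilon} - \frac{r-\epsilon}{1-r+\epsilon}\right) = q_n \cdot \frac{2\epsilon}{(1-r-\epsilon)(1-r+\epsilon)} \leq \frac{2\epsilon\, q_n}{(1-r-\epsilon)^2}.
\end{align*}
Since $q_n \leq q_1 \leq 1$ (a probability) this yields $|L_n - L| \leq 2\epsilon/(1-r-\epsilon)^2$; dividing by $L \geq 1$ (as $L = k + \sum q_i \geq k \geq 1$, or even $L \geq 1$ trivially) gives \eqref{errbd:gen}. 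Strict inequality follows because the hypothesis bounds on $r_i$ are strict.

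For the monotone refinement \eqref{errbd:mon}, if $\{r_i\}_{i > m}$ is monotone then it converges to $r$ from one side, so for $n \geq m$ all of $r_{n+1}, r_{n+2}, \ldots$ lie on the same side of $r$; say they all lie in $[r, r+\epsilon)$ (the other case is symmetric). Then simultaneously $T_n \leq q_n(r+\epsilon)/(1-r-\epsilon)$ and $T_n \geq q_n r/(1-r)$, and likewise $r_n \in [r, r+\epsilon)$ forces $A_n$ into the same tighter interval $[\,q_n r/(1-r),\, q_n(r+\epsilon)/(1-r-\epsilon)\,)$. So now $|T_n - A_n|$ is bounded by the length of this half-as-wide interval, $q_n\big((r+\epsilon)/(1-r-\epsilon) - r/(1-r)\big) = q_n\epsilon/((1-r-\epsilon)(1-r)) \leq q_n \epsilon/(1-r-\epsilon)^2 \leq \epsilon/(1-r-\epsilon)^2$, giving \eqref{errbd:mon}.

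The main obstacle I anticipate is purely bookkeeping: making sure the enclosing intervals for $T_n$ and $A_n$ are genuinely the \emph{same} interval (so that their difference is controlled by the interval length rather than by the sum of two separate error terms), and in the monotone case correctly arguing that $r_n$ itself — not just the later $r_i$ — lies on the favorable side of $r$, which is where the factor-of-two improvement comes from. One should also double-check the edge behavior when $r = 0$ or when some $q_n = 0$, but these are degenerate and cause no trouble since all bounds then hold trivially.
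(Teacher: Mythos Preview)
Your argument is correct and reaches the stated bounds, but by a genuinely different route from the paper. The paper does \emph{not} sandwich the true tail $T_n$ and the geometric tail $A_n$ in a common interval; instead it writes $L-L_m$ as the telescoping sum $\sum_{i\ge m+1}(L_i-L_{i-1})$ and computes the exact increment
\[
L_i-L_{i-1}=q_{i-1}\,\frac{r_i-r_{i-1}}{(1-r_i)(1-r_{i-1})}.
\]
Bounding the denominators by $(1-r-\epsilon)^2$ and the numerators by $|r_i-r_{i-1}|<2\epsilon$ leaves $\sum_{i\ge m+1}q_{i-1}$, which is dominated by $L$ via \eqref{Linfseries}; this is how the factor $L$ appears naturally in the relative bound \eqref{errbd:gen}. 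In the monotone case the paper instead uses $q_{i-1}<1$ and telescopes $\sum_{i\ge m+1}|r_i-r_{i-1}|=|r-r_m|<\epsilon$. Your interval-sandwich argument is more direct and in fact yields the slightly sharper intermediate inequality $|L_n-L|\le 2\epsilon\,q_n/(1-r-\epsilon)^2$ (an extra factor $q_n$), after which you recover \eqref{errbd:gen} by the crude estimates $q_n\le 1$, $L\ge 1$; the paper's telescoping buys a cleaner passage to the \emph{relative} error but gives up that $q_n$ factor. Two small points to tidy: your product lower bound $(r-\epsilon)^{i-n}\le\prod r_j$ is not literally true when $r-\epsilon<0$ and $i-n$ is even, though the resulting interval inclusion for $T_n$ still holds since then the lower endpoint is negative; and the index issue you flag for the monotone case (whether $r_m$ itself sits on the correct side of $r$) is real but equally present in the paper's telescoping step, so your caution there is well placed.
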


\begin{proof}

\begin{enumerate}
\item If  $L<\infty$, then from equation \eqref{Linfseries} it follows that  $q_n\rightarrow 0$. Taking limits on the right hand side of \eqref{Lnseries}, we complete the first result. That is:
\begin{align*}
\lim_{n\rightarrow\infty}L_n = L+\frac{\lim_{n\rightarrow\infty}q_n}{1-r} = L.
\end{align*}
\item
Choose $\epsilon >0$. Since $\{r_n\} \rightarrow r$, there is an index $m$ such that
\begin{align}
|r_i-r|<\epsilon, \qquad \forall i \geq m. \label{convcond}
\end{align}
To obtain the desired bounds on $L$ we first note that $L = \lim_{n\rightarrow \infty} L_n$ and write this limiting value as a telescopic sum to obtain:
\begin{align}
L & \equiv  L_m  + ( L_{m+1}-L_m)+(L_{m+2}-L_{m+1})+\cdots \nonumber \\
 &= L_m+\sum_{i=m+1}^{\infty} \left(L_i-L_{i-1}\right) \label{pf:thm1:lbl1}
\end{align}
For $i\geq 2$, we use \eqref{Lnseries} to obtain
\begin{align}
L_i-L_{i-1} &= q_{i-1}+ \frac{q_i}{1-r_i}-\frac{q_{i-1}}{1-r_{i-1}} \nonumber \\
            &= \frac{q_{i-1}}{1-r_i}-\frac{q_{i-1}}{1-r_{i-1}}      \nonumber \\
            &= q_{i-1}\frac{r_i-r_{i-1}}{(1-r_i)(1-r_{i-1})}        \label{pf:thm1:lbl2}
\end{align}
Using \eqref{pf:thm1:lbl2} in \eqref{pf:thm1:lbl1}, we can bound the approximation error as follows
\begin{align}
\left|L_m-L\right| &=\left| \sum_{i=m+1}^{\infty} q_{i-1}\frac{r_i-r_{i-1}}{(1-r_i)(1-r_{i-1})} \right|       \nonumber \\
 &\stackrel{\text{(a)}}{\leq}\sum_{i=m+1}^{\infty} q_{i-1}\frac{\left|r_i-r_{i-1}\right|}{(1-r_i)(1-r_{i-1})} \nonumber \\
 &\stackrel{\text{(b)}}{\leq} \frac{1}{(1-r-\epsilon )^2}\sum_{i=m+1}^{\infty} q_{i-1}\left|r_i-r_{i-1} \right|   \label{pf:thm1:lbl3}
\end{align}
where (a) follows since $q_i,1-r_i>0,\quad\forall i$, and (b) is due to our initial assumption that $1-r-\epsilon>0$ and from \eqref{convcond}, we can obtain that $1-r-\epsilon<1-r_i,\quad\forall i\geq m$.

From \eqref{pf:thm1:lbl3}, we can proceed to obtain \eqref{errbd:gen} as follows
\begin{align*}
\left| \frac{L_m-L}{L} \right|
&\stackrel{\text{(a)}}{<} \frac{1}{L}\frac{2\epsilon}{(1-r-\epsilon)^2} \sum_{i=m+1}^{\infty} q_{i-1}  \\
&\stackrel{\text{(b)}}{<} \frac{2\epsilon}{(1-r-\epsilon)^2}
\end{align*}
where (a) follows from the fact that for $i\geq m+1$, $|r_i-r_{i-1}|\leq |r_i-r|+|r_{i-1}-r|< 2\epsilon$ and (b) follows from the infinite sum in \eqref{Linfseries}.

From \eqref{pf:thm1:lbl3}, assuming $\{r_n\}_{m+1}^{\infty}$ to be monotonic, we can proceed to obtain \eqref{errbd:mon} as follows
\begin{align*}
\left|L_m-L\right|
 &\stackrel{\text{(a)}}{<} \frac{1}{(1-r-\epsilon )^2}\sum_{i=m+1}^{\infty} \left|r_i-r_{i-1} \right|         \\
 &\stackrel{\text{(b)}}{=} \frac{1}{(1-r-\epsilon )^2}\left|\sum_{i=m+1}^{\infty} (r_i-r_{i-1}) \right|       \\
 &\stackrel{\text{(c)}}{=} \frac{1}{(1-r-\epsilon )^2}\left|r-r_m \right|                                     \\
 &\stackrel{\text{(d)}}{<} \frac{\epsilon}{(1-r-\epsilon )^2}
\end{align*}
(a) is due to fact that $q_i<1$ $\forall i$, (b) follows from the monotonicity of $\{r_n\}_{m+1}^{\infty}$, (c) follows from an argument exactly similar to the one used to derive \eqref{pf:thm1:lbl1} and (d) is due to assumption \eqref{convcond}.
\end{enumerate}
\end{proof}

Applicability of Theorem \ref{thm:errbound} relies on the convergence of $\{r_n\}$. We discuss the convergence of $\{r_n\}$ for some moving sum algorithms in the following section.

\section{Convergence of $\{r_n\}$ for Moving Sums}   \label{sec:conv:rn}
We use simulation studies to examine the convergence properties of $r_n$ for MA and FD algorithms. In all our simulations, it was assumed that the error variables $X_i$ are normally distributed. The MPDFs were calculated as multivariate normal CDFs using the technique in \citep*{Genz92}. We have used the software available from the author's website. We evaluate $r_n$ for various values of the span $k$. We fix the standardized thresholds at $\delta = 0$ and $\delta = 2$. The graphs are displayed in Figure \ref{fig:conv:rn}.

\begin{figure*}[p!]
     \centering
     \subfigure[Moving average algorithm with low threshold ($\delta=0$)]{
          \label{fig:conv:rn:ma:d0}
          \includegraphics[width=.45\textwidth]{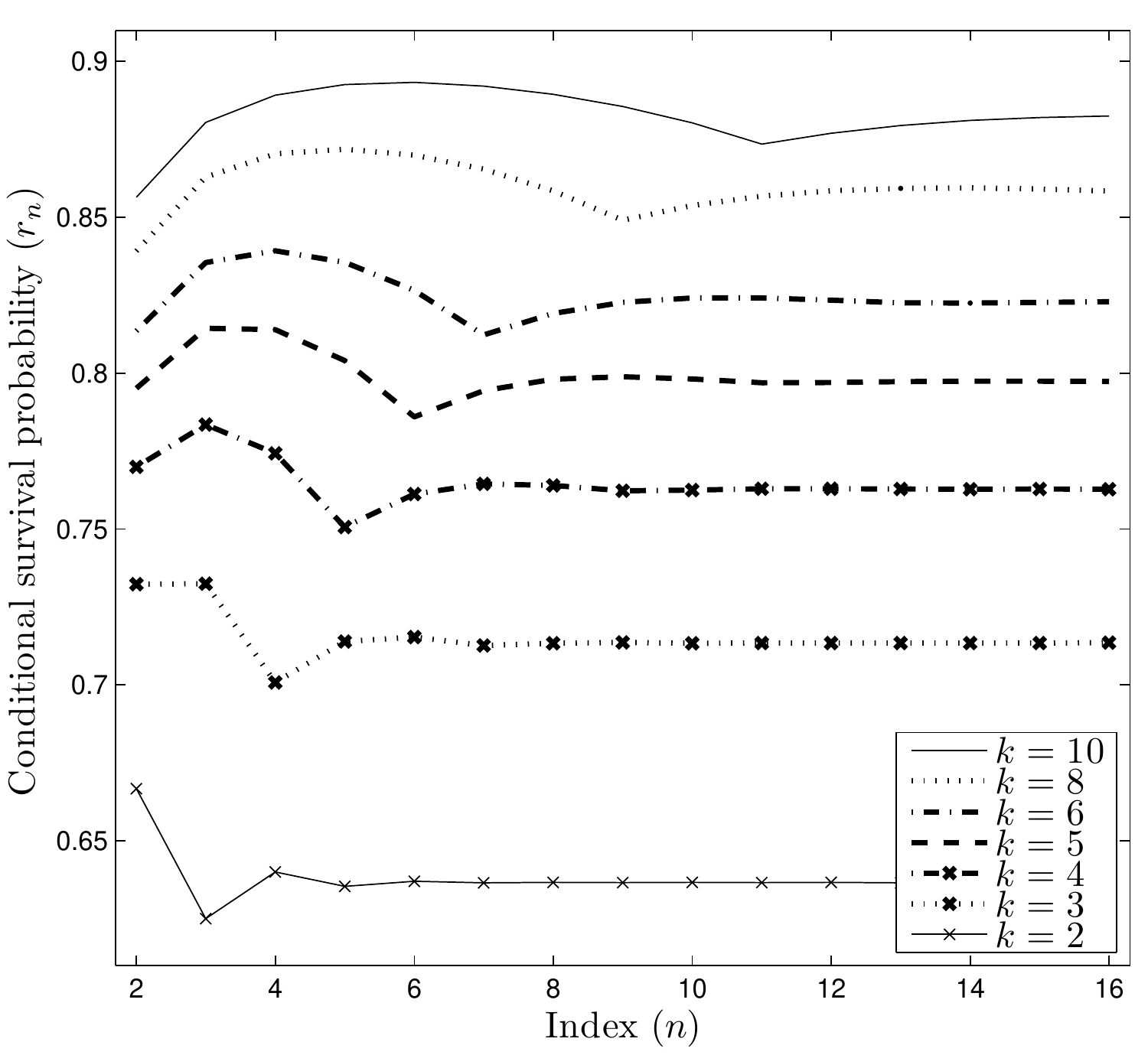}}
     \hspace{.3in}
     \subfigure[Moving average algorithm with high threshold ($\delta=2$)]{
          \label{fig:conv:rn:ma:d2}
          \includegraphics[width=.45\textwidth]{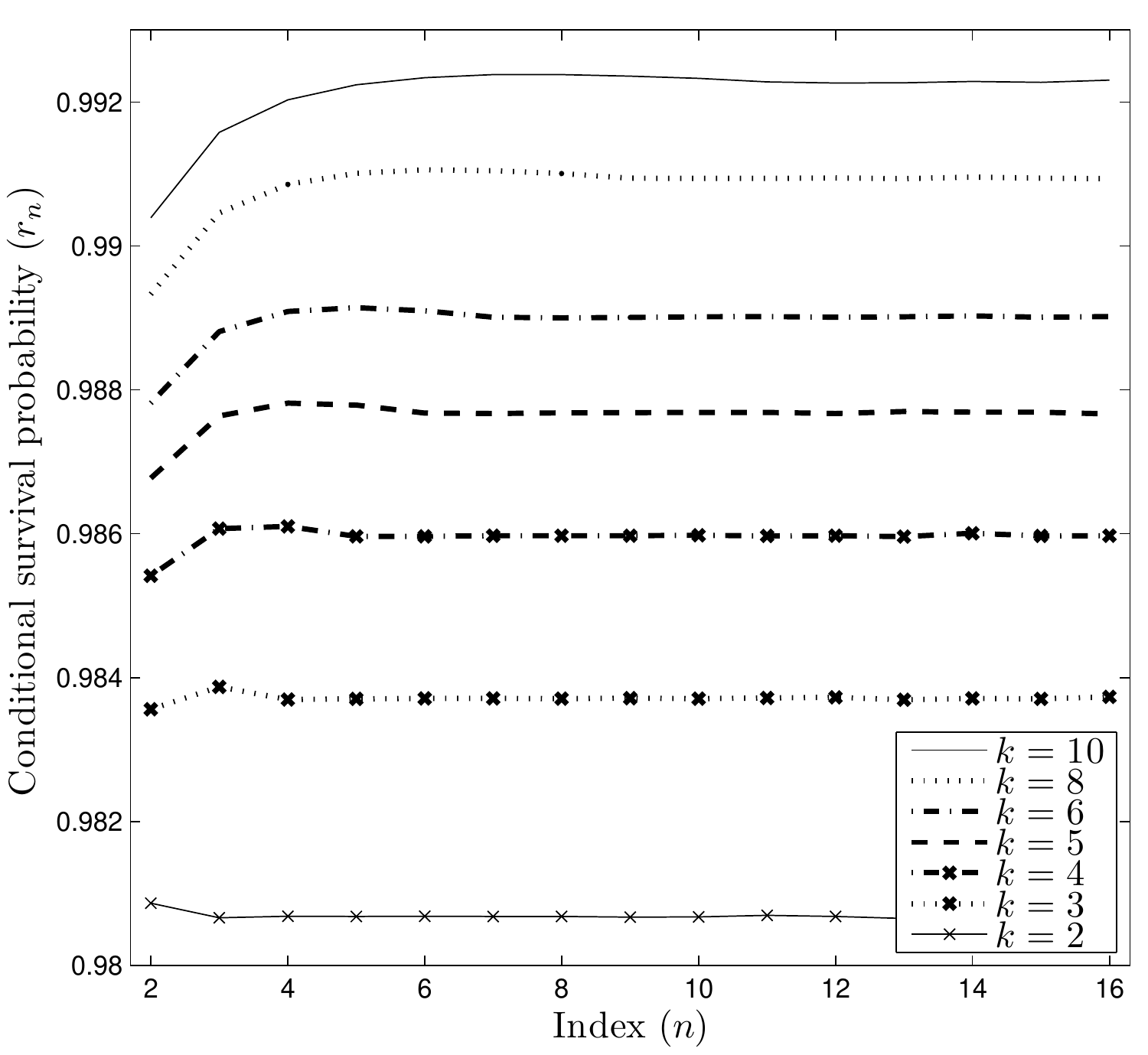}}\\
     \vspace{.3in}
     \subfigure[Filtered derivative algorithm with low threshold ($\delta=0$)]{
           \label{fig:conv:rn:fd:d0}
           \includegraphics[width=.45\textwidth]{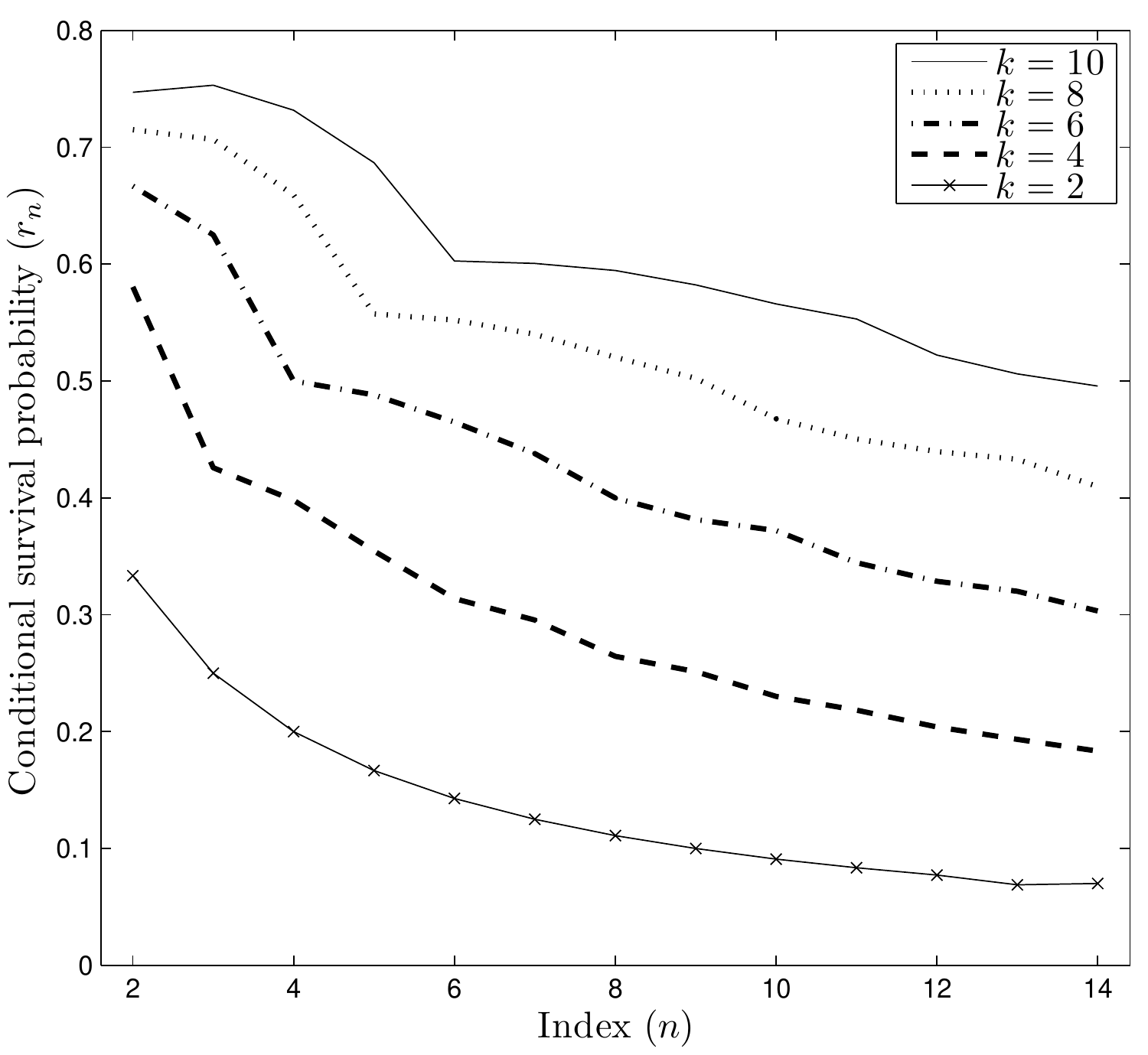}}
     \hspace{.3in}
     \subfigure[Filtered derivative algorithm with high threshold ($\delta=2$)]{
           \label{fig:conv:rn:fd:d2}
          \includegraphics[width=.45\textwidth]{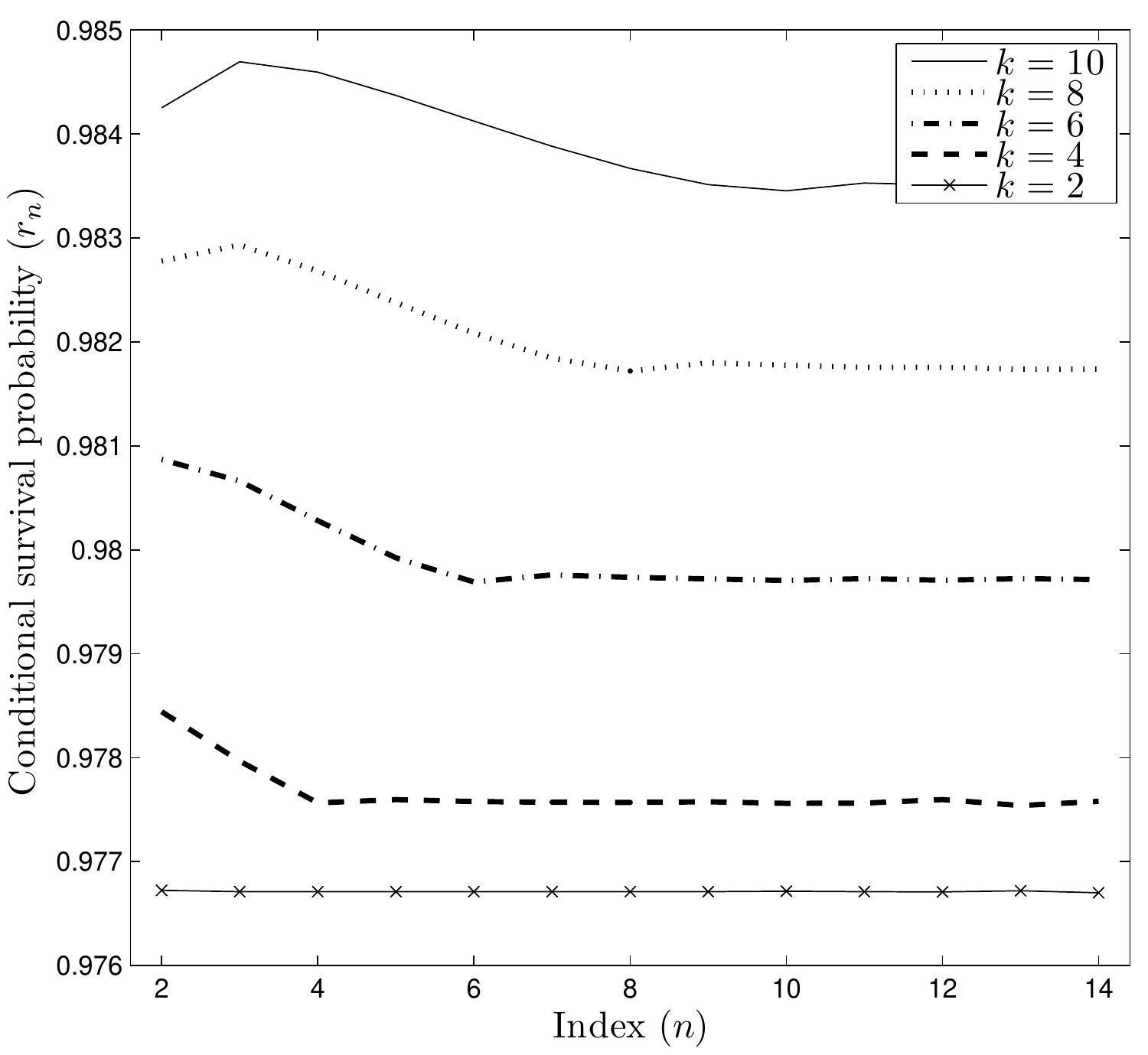}}
     \caption{Convergence of conditional survival probability ($r_n$) for one-sided moving average and filtered derivative algorithms considering low ($\delta=0$) and high ($\delta=2$) thresholds.}
      \label{fig:conv:rn}
\end{figure*}

In Figure \ref{fig:conv:rn:ma:d0} we observe that $r_n$ achieves its limiting value very quickly after some initial zig-zag pattern. This limiting value, $r$, depends on $k$ and it is observed that the length of zig-zagedness depends on the value of $k$. This behavior is more pronounced for $\delta = 0$.
For the case of filtered derivatives $r_n$ (see Figures \ref{fig:conv:rn:fd:d0} and \ref{fig:conv:rn:fd:d2}) monotonically decrease after initial minor deviations. Consequently, for the filtered derivative model the tighter bound, given by equation \eqref{errbd:mon}, can be applied. The monotonic convergence ($r_n\rightarrow 0$) follows for the special case of MOSUM$([-1,1],0)$ from Theorem \ref{thm:splcases} where  we had observed that  the survival probability $q_n = \frac{1}{(n+1)!}$ and therefore the conditional survival probability is $r_n = \frac{1}{n+1}$ which converges to zero.

The zig-zag property, which is most pronounced in Figure \ref{fig:conv:rn:ma:d0}, is easy to explain for the special case of MOSUM$([1,1],0)$. From Theorem \ref{thm:splcases} we know that  $q_n$ is the coefficient of $z^n$  in the power series expansion of $\sec(z)+\tan(z)$ around $z=0$.  Thus, we obtain the survival probabilities as:
\begin{align*}
\{q_n\} &= \left\{\frac{1}{2},\frac{1}{3},\frac{5}{24},\frac{2}{15},\frac{61}{720},\cdots \right\}.
\end{align*}
The conditional survival probabilities can therefore be computed as
$\{r_n\} = \left\{\frac{2}{3},\frac{5}{8},\frac{16}{25},\frac{61}{96},\cdots \right\} = \{0.667, 0.625, 0.640, 0.635, \cdots \}$. It is a known result \citep*{seqa11} that $r_n$ converges to $2/\pi=0.6366$. Also $\{r_n\}$ decreases and increases alternately, giving the series a zig-zag appearance of period $2$. For moving averages of span $k$, we observe that the zig-zag pattern persists with period $k$.

Based on the observations in Figure \ref{fig:conv:rn}, several conjectures can be made about the convergence of the conditional survival probabilities $\{r_n\}$ in MOSUM algorithms.
\begin{itemize}
\item The sequence $\{r_n\}$ converges to some $r$ as $n$ incresaes,
\item the limit point $r$ is an increasing function of both the $k$ and the threshold $\delta$,
\item the convergence to $r$ is faster for higher thresholds.
\end{itemize}
It was shown in Theorem \ref{thm:errbound} that convergence of $\{r_n\}$ is a sufficient condition for the applicability of the series approach of approximating the ARL. In the next section, we apply the series approach to MA and FD algorithms.

\section{Comparison with LBH bounds and approximation of ARL}   \label{sec:comp:approx}
First we consider moving average algorithms and compare the convergence of the series approach \eqref{Linfseries} with the LBH bounds given by \eqref{mabound}. Since the filtered derivative algorithms have negative weights, the LBH bounds are not applicable.

A representative example,  demonstrating the convergence of $\{L_n\}$, is shown is Figure \ref{fig:arlcomp:q95w8} for $k=8$. We considered the threshold $h$ such that that $q_1=P(Y_1<h) = 0.95$. We have also plotted the ARL obtained via Monte-Carlo simulations.  We observe that the series enters the region bounded by $L_l$ and $L_u$ fairly rapidly. Though the LBH bounds could only be calculated using MPDFs of $8$\sups{th} order, reasonable approximation can be obtained using lower order MPDFs. In Figure \ref{fig:arlcomp:k8}, we demonstrate this fact for different thresholds. Since the ARL varies with threshold, we compared them with their respective LBH bounds. We plot the ARL in excess of the corresponding lower LBH bound, i.e., we plot the quantity $L_u-L_l$. We note that $L_u-L_l=k-1$. The [$0,k-1$]-lines (dotted) represent $L_l$ and $L_u$ respectively.

\begin{figure}
\begin{center}
    \includegraphics[width=9cm]{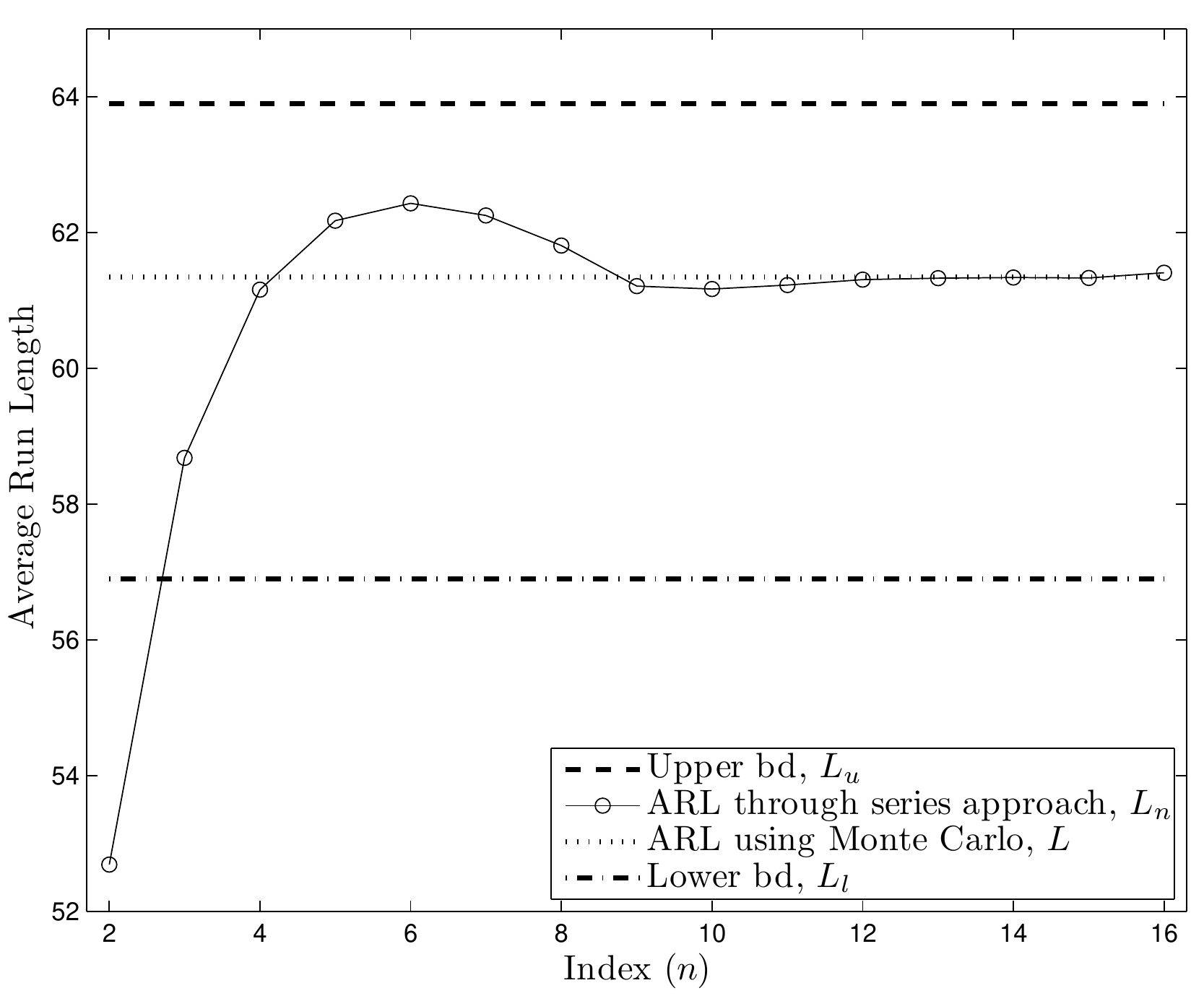}
  \caption{Convergence of $\{L_n\}$ and comparison with $L_l,L_u$ for span $k=8$ and threshold given by $q_1=0.95$}
  \label{fig:arlcomp:q95w8}
  \end{center}
\end{figure}

\begin{figure}
\begin{center}
    \includegraphics[width=9cm]{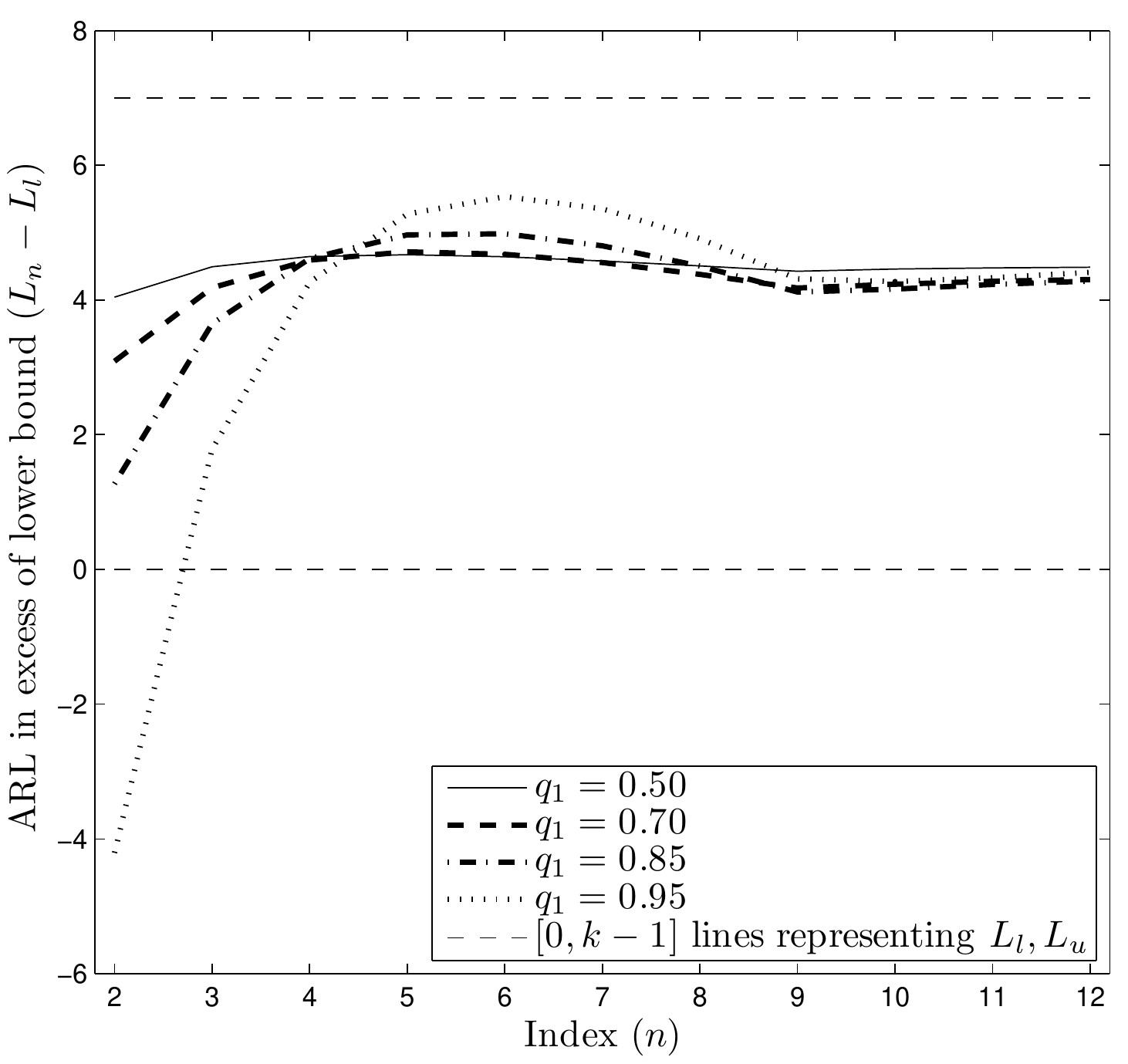}
  \caption{Convergence of $\{L_n\}$ and comparison with $L_l,L_u$ for span $k=8$ and various thresholds}
  \label{fig:arlcomp:k8}
  \end{center}
\end{figure}

We tabulate the ARL for moving average and filtered derivative algorithms in Table \ref{tbl:mafd:s1} and compare them with approximate values obtained using the series approach in \eqref{Lnseries}. We show the approximation $L_{\lceil k/2 \rceil}$, which uses only $\lceil k/2 \rceil$\sups{th} order MPDFs. For moving average algorithm with span $k\leq 10$, the reference values of $L$ were obtained from \cite{SQC04}. All other reference values were obtained using Monte-carlo simulations. We have considered only one-sided tests and the ARLs are tabulated in Table \ref{tbl:mafd:s1}. We conclude that $L_{\lceil k/2 \rceil}$ provides a reasonable approximation of $L$.

\begin{table} [h]
     \centering
     \subtable[ARL for one-sided moving averages]{
          \label{tbl:ma:s1}
                \begin{tabular}{|c||cc|cc|cc|}
                        \hline
                $\delta\rightarrow$ &\multicolumn{2}{|c|}{2}&\multicolumn{2}{|c|}{2.5}&\multicolumn{2}{|c|}{3}\\
                          \hline
                $k\downarrow$  &  $L$    &   $L_{\lceil k/2 \rceil}$&  $L$    &   $L_{\lceil k/2 \rceil}$&  $L$    &   $L_{\lceil k/2 \rceil}$ \\
                        \hline \hline
                3   & 63.0   & 62.5   & 206.4  & 204.5  & 869.6  & 866.8  \\
                4   & 73.6   & 71.0   & 233.3  & 227.7  & 967.0  & 947.4  \\
                5   & 84.2   & 84.0   & 263.3  & 261.4  & 1055.8 & 1057.6 \\
                6   & 94.8   & 93.2   & 292.1  & 286.8  & 1155.8 & 1147.5 \\
                8   & 115.7  & 114.7  & 346.7  & 344.2  & 1353.0 & 1345.2 \\
                10  & 136.5  & 135.6  & 403.4  & 401.0  & 1548.8 & 1547.3 \\
                13  & 167.0  & 166.9  & 487.1  & 484.1  & 1835.3 & 1832.8 \\
                16  & 196.7  & 196.9  & 568.5  & 567.0  & 2119.5 & 2110.5 \\
                        \hline
                \end{tabular}
                } \\
     \subtable[ARL for one-sided filtered derivatives]{
          \label{tbl:fd:s1}
                \begin{tabular}{|c||cc|cc|cc|}
                        \hline
                $\delta\rightarrow$ &\multicolumn{2}{|c|}{2}&\multicolumn{2}{|c|}{2.5}&\multicolumn{2}{|c|}{3}\\
                          \hline
                $k\downarrow$  &  $L$    &   $L_{\lceil k/2 \rceil}$&  $L$    &   $L_{\lceil k/2 \rceil}$&  $L$    &   $L_{\lceil k/2 \rceil}$ \\
                        \hline \hline
                4   & 47.7   & 49.3   & 166.4  & 168.4  & 749.3  & 752.1  \\
                6   & 54.3   & 56.5   & 181.0  & 183.7  & 788.3  & 791.9  \\
                8   & 61.7   & 64.5   & 198.7  & 202.1  & 842.0  & 846.8  \\
                10  & 69.5   & 72.6   & 217.5  & 221.6  & 902.0  & 908.0  \\
                12  & 77.1   & 80.7   & 237.3  & 241.6  & 968.0  & 972.8  \\
                14  & 84.8   & 88.9   & 256.6  & 261.7  & 1033.6 & 1036.9 \\
                16  & 92.6   & 97.0   & 276.4  & 281.8  & 1098.6 & 1106.1 \\
                        \hline
                \end{tabular}
                }
     \caption{Comparison of the ARL with its series approximation. For both moving average and filtered derivative algorithms of span $k$, the $\lceil k/2\rceil$\sups{th} order approximation is reasonably accurate.}
     \label{tbl:mafd:s1}
\end{table}

\section{Conclusion} \label{sec:conc}
In this paper, we have considered the approximation of ARL for moving sum algorithms with arbitrary weights using multivariate probabilities. Specifically, we have considered moving average and filtered derivative algorithms. We have applied a series approach that was originally proposed by \cite{Rob78} for geometric moving average algorithms. We have presented an analysis of the convergence of the series. We have shown using simulation studies that multivariate probabilities of order $\lceil k/2 \rceil$ can provide reasonable approximations of the ARL. We have also derived the ARL for two special cases of MOSUM algorithms, and have used them as illustrative examples.

\bibliographystyle{elsarticle-harv}
\bibliography{series}
\end{document}